\documentclass[runningheads,a4paper]{llncs}

\usepackage[T1]{fontenc}
\usepackage{amsmath,amssymb}
\usepackage{array}
\usepackage{booktabs}
\usepackage{graphicx}
\PassOptionsToPackage{hyphens}{url}
\usepackage{url}
\usepackage{cite}
\usepackage[hidelinks,hypertexnames=false]{hyperref}
\input{glyphtounicode}
\graphicspath{{figures/}{diagrams/}}

\newcommand{\R}{\mathbb{R}}
\newcommand{\E}{\mathbb{E}}

\newcommand{\rank}{\operatorname{rank}}
\newcommand{\one}{\mathbf{1}}

\newcolumntype{P}[1]{>{\raggedright\arraybackslash}p{#1}}
\spnewtheorem{assumption}{Assumption}{\bfseries}{\itshape}

\title{When Cross-Venue Agreement Is Not Price Discovery: Disclosure Frontiers for 24/7 Equity-Perpetual Oracles}
\titlerunning{Oracle Disclosure Frontiers}
\author{%
Donghwa Seo\inst{1} \and
Doohwi Cha\inst{2} \and
Seunghan Son\inst{3} \and
Juyeong Lee\inst{4} \and
Minjae Lee\inst{3} \and
Minsuk Sung\inst{5}}
\authorrunning{D. Seo et al.}
\institute{%
DS Investment \& Securities, Seoul, South Korea\\
\email{properitas95@kaist.ac.kr}
\and
Mirae Asset Securities, Seoul, South Korea\\
\email{doohwicha@gmail.com}
\and
Independent Researcher, South Korea\\
\email{qzpm5n@g.skku.edu}, \email{abraxasnz13@gmail.com}
\and
EY Consulting, Seoul, South Korea\\
\email{juyeong.lee@kr.ey.com}
\and
Department of Artificial Intelligence, Korea University, Seoul, South Korea\\
\email{minsuksung@korea.ac.kr}}

\begin{document}
\raggedbottom
\maketitle

\begin{abstract}
Crypto-listed equity perpetuals trade while the primary cash market is closed, yet
still need a mark for margin, funding, and liquidation. We model the closed-window
mark as the fixed point of an oracle operator with two blocks: external anchoring
and self/peer derivative reference. From marks and proxies alone the two are
observationally equivalent: every reduced form admits infinitely many topology
decompositions, and a path-law argument extends this to the full mark dynamics, so
lead-lag and information-share estimators have power equal to size. Disclosure
breaks the tie---disclosed diagonal adjustment identifies the normalized topology,
and disclosed support with forbidden anchors gives a row-level test that identifies,
falsifies, or leaves a positive-dimensional class under a rank condition and
finite-sample tolerance. Empirically, a disclosed OKX row survives pre-open
falsification while a pure-external baseline shows the test's limited power, and an
eight-week deep-closed panel with cash-reopen validation bounds the live-external
content of closure variance. Cross-venue agreement is not price discovery unless
disclosure or the cash reopen breaks the equivalence class.

\keywords{crypto derivatives \and perpetual futures \and blockchain oracles
\and price discovery \and fixed points}
\end{abstract}

\section{Introduction}

Perpetual futures and perpetual swaps are normally analyzed through funding,
replication, and no-arbitrage relations \cite{AngerisEtAl2023,AckererEtAl2025}.
For equity perpetuals listed in crypto markets, an additional market-design
problem appears: the derivative trades continuously while the primary
cash-equity venue is closed. At 03:00 New York time, on a weekend, or on a
US market holiday, an AAPL perpetual still needs a mark for margin, funding, and
liquidation. Such 24/7 equity perpetuals are now mainstream---Kraken markets the
first regulated tokenized-equity perpetuals, and OKX, Bybit, and Gate list stock
or stock-index perpetuals
\cite{KrakenXStocksPerps2026,OKXStockPerps2026,BybitTradFiPerps2026,GateIndexPrice2026}---and
their published rules build the closed-window mark from a stale cash close,
extended-hours prints, index futures, sector ETFs, tokenized-stock or RWA feeds,
overnight quote providers, the venue's own contract price, and peer derivative
marks, excluding stale components or carrying the last close forward while the
cash market is shut.

These published rules are concrete, not hypothetical. OKX mixes stock feeds,
RWA/tokenized-stock sources, peer equity perpetuals, its own contract, and other
exchanges' indices, reweighting toward crypto-native sources off-hours
\cite{OKXStockPerps2026}; Bybit excludes stale components and clamps the mark
\cite{BybitTradFiPerps2026}; Gate combines stock-token and U.S.\ stock prices with
overnight-quote or carry-forward fallbacks \cite{GateIndexPrice2026}; Binance and
Hyperliquid build marks from index, book, contract, and median components with
fallbacks \cite{BinanceMarkPrice2026,HyperliquidOracle2026}; and Kraken's regulated
tokenized-equity perpetuals confirm the product class \cite{KrakenXStocksPerps2026}.
In every case the mark mixes genuinely external feeds with the venue's own and peer
derivative prices---the structure the rest of the paper formalizes---as
Table~\ref{tab:disclosure} surveys.

\begin{table}[t]
\caption{Disclosed closed-window mark inputs for documented venues (June 2026).}
\label{tab:disclosure}
\centering
\footnotesize
\begin{tabular}{P{0.15\linewidth}P{0.57\linewidth}P{0.17\linewidth}}
\toprule
Venue & Disclosed closed-window inputs & Anchoring \\
\midrule
OKX & Hyperliquid oracle, OKX perpetual, Binance index, external vendor feeds (Pyth/Ondo/dxFeed) & external + self/peer \\
\midrule
Binance & price index, funding, order book, own contract, data fallbacks & external + self \\
\midrule
Bybit & index components with off-hours stale exclusion, mark clamp & external (clamped) \\
\midrule
Gate & Gate stock-token, U.S. stock price, own contract ($3/4$ underlyings), overnight quotes else carry-forward & external + self \\
\midrule
Hyperliquid & median of validator oracle, local book, external perpetual & dual external/peer \\
\bottomrule
\end{tabular}
\par\smallskip
{\footnotesize OKX component snapshots are frozen under
\texttt{artifact/source\_snapshots/} (June 2026); Bybit, Gate, and Kraken rules
were rechecked 26 June 2026.}
\end{table}

The frozen AAPL audit gives one documented component-row replay: the public OKX
AAPL-USDT index reports Hyperliquid Oracle $0.343$, OKX Linear Perpetual $0.057$,
Binance Index $0.171$, and external vendor weight $0.429$ (Pyth, Ondo, dxFeed),
with identical weights for NVDA, TSLA, and AMZN and a $0.260$ bps replay residual.
Because Hyperliquid provenance is not pinned by the public endpoint, two codings
are carried throughout: the as-parsed peer coding (derivative weight $0.571$) and
the conservative Hyperliquid-as-external coding ($0.228$). Cash-flow boundaries are
kept separate from topology via $\tilde z_t=T_tz_t+c_t$, with $T_t$ for split,
currency, and numeraire and $c_t$ for dividends and corporate actions; where these
are not fixed by the rule the residual is a boundary effect, not evidence about
$W$, so the empirics restrict to windows where they are flat or documented.

The stakes are not cosmetic: the closed-window mark sets margin, funding, and
liquidation, so for hours at a time it is the operative price for every leveraged
position. If the mark is a direct estimate of external equity value the contract
is a faithful 24/7 shadow of the cash market; but if the rule lets venues
reference their own or each other's derivative marks, the mark becomes the
equilibrium of a feedback loop---marks that agree because they read one another,
not because they track an outside price. Such an endogenous fixed point can sit
far from the eventual cash reopen yet look perfectly orderly---low dispersion,
high proxy correlation---until the cash market reopens at a level the network
never anchored to, so telling the two regimes apart is a first-order risk question
this paper formalizes.

The question is mathematical before it is empirical: once the rule permits self-
or peer-reference, the mark is the equilibrium of a networked oracle operator and
the object that matters is the mark's effective exposure to external anchors
through the whole reference network. The natural
instinct---test anchoring with the multi-market price-discovery toolkit, or read
cross-venue agreement as health---fails by construction, because agreement is
exactly what an endogenous network manufactures and the price-discovery estimands
take the same value from purely external to purely self-referential topologies. No
amount of mark or proxy data breaks the tie, since it is an observational
equivalence rather than a finite-sample artifact; what breaks it is information
from outside the price data---the mechanism the venue discloses and the tradable
level the cash market prints when it reopens.

We model the closed-window mark as an oracle fixed point and establish when it
exists, is unique, and is stable. We then prove an estimator-level
impossibility---infinitely many topologies share one reduced form, and a path-law
argument extends this to the full conditional mark dynamics, so information shares,
lead-lag, and Granger statistics cannot separate external anchoring from peer
consensus---together with a matching possibility result: disclosed diagonal
adjustment identifies the normalized topology, and disclosed support, signs, and
forbidden feeds turn the same algebra into a row-level test that identifies a row,
falsifies the rule, or leaves a positive-dimensional class under a minimal-rank
condition and a finite-sample tolerance bound. We read documented venue rules
through this lens and bring a frozen, reproducible battery---a four-underlying
pre-open row falsification, an eight-week deep-closed panel, and an archival
cash-reopen study---to bear, consistent with the claim that disclosure, not more
price data, identifies a closed-window mark while stopping short of recovering the
topology.

The paper is organized around that boundary. Section~2 places the question against
the price-discovery, oracle-manipulation, and peer-effect literatures by estimand.
Section~3 sets up the closed-window oracle model and its fixed point and shows that
funding sets only convergence speed, not the target. Section~4 proves the
non-identification and path-law impossibility, the matching diagonal-adjustment and
disclosure-frontier possibility results, and the cash-reopen decomposition that
validates anchoring. Section~5 illustrates the mechanism on a synthetic network,
and Section~6 brings the frozen empirical battery to bear; the paper concludes with
the mechanism-design implication.

\section{Related Literature}

Perpetual pricing studies funding, replication, and no-arbitrage against a
continuously tradable underlying \cite{AngerisEtAl2023,AckererEtAl2025}; equity
perpetuals break that premise because the cash market is closed for most of the
contract's life, and the product is now mainstream
\cite{KrakenXStocksPerps2026,OKXStockPerps2026,BybitTradFiPerps2026,GateIndexPrice2026}.
That literature explains how funding ties a perpetual to a given index but
is silent on how the index is formed with no cash price. The
blockchain-oracle literature studies how off-chain values enter a chain and how
that bridge is attacked
\cite{EskandariEtAl2021,QinEtAl2021};
closest to us, \cite{YangKlagesMundtGudgeon2023} recovers an off-chain level
to flag anomalies, asking whether a price is right where we ask which feeds
a mark is built from---an oracle can be honest yet leave its anchoring
unidentified.

The classical price-discovery toolkit---lead-lag, common-factor, and
information-share measures
\cite{Hasbrouck1995,GonzaloGranger1995}---presupposes a shared
contemporaneous efficient price that a closed cash market lacks, and our path-law
result shows these exact estimands are invariant across observationally equivalent
topologies. After-hours discovery is small and noisy
\cite{BarclayHendershott2003}, so a mark built from
extended-hours prints is doubly removed from a tradable price and the cash reopen,
not the closed-window data, is the external event our validation leans on. Our
impossibility is the closed-market analogue of the reflection problem
\cite{Manski1993,BramoulleEtAl2009} and the near-unit consensus mode is the oracle
counterpart of naive network learning \cite{GolubJackson2010}; the overnight
stale-price regime that time-zone arbitrage exploits \cite{DonnellyTower2008} is
exactly where our row diagnostic is most fragile.

Across these literatures the estimand never matches the question: price discovery
estimates an information share, the attack literature an exploitability margin, and
the reflection literature a non-separation result, none pinning down the object a
venue's documentation can. That object is the admissible disclosure set
$\mathcal A_v(R)$, the oracle rows consistent with both the observed reduced form
and the venue's written restrictions, which collapses to a point, falsifies the
rule, or stays positive-dimensional. Reframing the question around
$\mathcal A_v(R)$ turns disclosed support, forbidden anchors, weights, and
fallbacks---not more price data---into the lever that identifies a closed-window
mark.

\section{Closed-Market Oracle Model and Fixed Points}
This section turns those disclosures into a single operator. The object to model
is not a price but a rule: each venue publishes an index that mixes external
anchors with same-underlying derivative marks, and the traded mark is pulled
toward that index. Writing the mixture as a linear map makes the closed-window
mark the solution of a fixed-point equation whose two blocks---external anchoring
through $B$ and derivative self/peer reference through $W$---are exactly what an
observer must later tell apart.

Fix one underlying equity and one cash-market closure window. There are $V$
venues or mark sources. Let $m_t\in\R^V$ denote same-underlying perpetual marks
and $z_t\in\R^Q$ denote external anchors observable during the closure. Each
venue computes an index or mark target
\begin{equation}
  i_t = W_t m_t + B_t z_t + \eta_t.
  \label{eq:index}
\end{equation}
The matrix $W_t\in\R^{V\times V}$ is the derivative self/peer-reference
matrix. The matrix $B_t\in\R^{V\times Q}$ maps external anchors into venue
indices. The residual $\eta_t$ captures unobserved oracle noise, discretionary
adjustments, stale feeds, filtering, and measurement error. Mark dynamics pull
toward the index:
\begin{equation}
  dm_t=-K_t(m_t-i_t)\,dt+\Sigma_t\,d\epsilon_t.
  \label{eq:dynamics}
\end{equation}
Substitution gives
\begin{equation}
  dm_t =
  -K_t\left[(I-W_t)m_t-B_tz_t-\eta_t\right]dt
  +\Sigma_t\,d\epsilon_t .
  \label{eq:substituted}
\end{equation}

Each object has a plain reading: $B$ is the oracle quoting the outside world
(cash, futures, ETF, or vendor feeds produced independently of the venue's
derivatives), $W$ is the oracle quoting other oracles (self- and peer-reference),
$K$ is how hard funding and mark control pull the mark toward the index, and
$\eta$ collects discretion, clamps, filters, and stale-feed substitutions. The
identification problem is exactly the split between $B$ and $W$: the realized
mark is $m^*=(I-W)^{-1}(Bz+\eta)$ (Lemma~\ref{prop:existence}), so a
low-$W$, high-$B$ venue is externally anchored, while a high-$W$, low-$B$ venue
can display the same cross-venue agreement even though that agreement is
manufactured by the derivative network rather than discovered from an external
price.

\begin{assumption}[Active affine regime]\label{ass:regime}
Fix an active rule regime $g$. Throughout $g$ the maps are constant,
$(W_t,B_t,K_t)\equiv(W,B,K)$, the index rule is exactly affine
$i_t=Wm_t+Bz_t+\eta_t$, and $K$ is nonsingular on marked components. A change in
median selection, clamp, fallback, or stale-feed filter starts a new regime $g'$
with its own $(W',B',K')$ and enters as a separate observation rather than as
nonlinearity within $g$.
\end{assumption}

The model is a networked operator. The realized mark is the equilibrium
$(I-W)^{-1}Bz$ of the whole reference graph rather than a reading of any single
edge. A directed $W$-cycle---for example Gate$\leftrightarrow$Hyperliquid---can
sustain an endogenous fixed point with no external input, the configuration the
identification results below must contend with.

Before asking what an observer can recover, we must check that the closed-window
mark is a well-defined object at all. Equation~\eqref{eq:substituted} is a
feedback system---each venue's mark feeds the index that others reference---so
the reported mark is the solution of a fixed-point equation rather than a free
variable. It establishes existence, uniqueness, and stability, reads the operator
as a convergent path sum, and shows funding sets convergence speed, not the target.

\begin{lemma}[Existence, uniqueness, stability]\label{prop:existence}
Under Assumption~\ref{ass:regime}, the deterministic fixed point of
equation~\eqref{eq:substituted} solves $(I-W)m^*=Bz+\eta$; it is unique iff $1$ is not an
eigenvalue of $W$, giving $m^*=(I-W)^{-1}(Bz+\eta)$, and is locally exponentially
stable iff every eigenvalue of $K(I-W)$ has positive real part.
\end{lemma}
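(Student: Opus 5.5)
Under Assumption~\ref{ass:regime} the system is the linear time-invariant ODE obtained from equation~\eqref{eq:substituted} by dropping the diffusion term and freezing the forcing: $\dot m=-K[(I-W)m-Bz-\eta]$, with $z$ and $\eta$ held at fixed values. The plan is to treat the three claims separately: fixed-point equation, uniqueness, and stability.

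\emph{Fixed-point equation.} A deterministic fixed point is a zero of the drift, so $K[(I-W)m^*-Bz-\eta]=0$. Because $K$ is nonsingular (on marked components), we can cancel it and get $(I-W)m^*=Bz+\eta$. Conversely, any solution of that linear system zeroes the drift. The point to record is that $K$ drops out entirely: funding sets the speed of convergence, not the target.

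\emph{Uniqueness.} This is linear algebra on $(I-W)m=Bz+\eta$.
\begin{itemize}
\item If $1$ is not an eigenvalue of $W$, then $I-W$ is invertible. The solution $m^*=(I-W)^{-1}(Bz+\eta)$ exists and is unique.
\item If $1$ is an eigenvalue, take $Wv=v$ with $v\neq 0$. Every solution can then be shifted by $tv$, so a solution, when one exists, is not unique. Otherwise there is no solution.
\end{itemize}
The ``iff'' is therefore read as: a unique fixed point exists exactly when $I-W$ is nonsingular. When $\rho(W)<1$, I would add the Neumann series $(I-W)^{-1}=\sum_k W^k$ as the path-sum reading, but this is not needed for the proof.

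\emph{Stability.} Set $e=m-m^*$. Then $\dot e=-K(I-W)e$, so the system is linear and local stability coincides with global stability. Its origin is exponentially stable iff $-K(I-W)$ is Hurwitz, that is, iff every eigenvalue of $K(I-W)$ has positive real part. This is the standard result, obtained through the Jordan form or the bound $\|e^{At}\|\le Ce^{-\alpha t}$.

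For the converse direction:
\begin{itemize}
\item An eigenvalue with negative real part gives a growing mode.
\item An eigenvalue with zero real part gives a mode that does not decay.
\end{itemize}
Either way exponential stability fails.

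Consistency check: the stability condition implies $K(I-W)$ is nonsingular, hence $I-W$ is nonsingular, so stability already entails uniqueness.

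The main obstacle is the phrase ``nonsingular on marked components.'' If $K$ is singular on unmarked coordinates, the cancellation step fails, and the fixed-point set is $\{m: K[(I-W)m-Bz-\eta]=0\}$. I would restrict to the marked block, where $K$ acts, and interpret the lemma there. If that restriction is unclear, I would simply assume $K$ is invertible on the full $\R^V$.
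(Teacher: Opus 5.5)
Your proposal is correct and follows the paper's own argument: you set the drift to zero and cancel the nonsingular $K$ to get $(I-W)m^*=Bz+\eta$, read uniqueness off the invertibility of $I-W$, and obtain stability from the error system $\dot e=-K(I-W)e$ being Hurwitz. Your additions only fill in details the paper calls immediate: reading the ``iff'' as existence-and-uniqueness, the converse growing or non-decaying modes, and the observation that stability forces $I-W$ to be nonsingular.
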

The argument is immediate: setting the drift of equation~\eqref{eq:substituted} to zero and
cancelling the nonsingular $K$ leaves $(I-W)m^*=Bz+\eta$, uniquely solvable exactly
when $I-W$ is invertible, and the error system $\dot x_t=-K(I-W)x_t$ gives the
stability condition. The content is not the inversion but its reading: the mark is
the equilibrium of a network, so one observed level can be produced by many
different $(W,B)$ splits---the identification problem of
Section~\ref{sec:identification}.

The first consequence is a mechanism-design warning specific to perpetuals.
Because $K$ enters equation~\eqref{eq:substituted} only as a left factor of the drift, it
cancels when the drift is set to zero: the fixed point $m^*=(I-W)^{-1}(Bz+\eta)$
does not depend on $K$ at all, while the error dynamics $\dot x_t=-K(I-W)x_t$ only
run faster as $K$ grows. Funding, liquidation, and mark-control intensity
therefore set the \emph{speed} at which the mark reaches the oracle target, never
\emph{which} target it reaches---the closed-window target is a property of the
reference topology, not of how hard the venue enforces its mark. This is why a
venue cannot ``fix'' a self-referential oracle by funding more aggressively:
faster funding reaches the \emph{same} endogenous target sooner, and if that
target is mis-anchored, stronger control merely makes the system efficiently
wrong---the opposite of the reassurance tight mark control is usually taken to
provide.

When $\rho(W)<1$ the operator expands as a convergent Neumann series
$m^*=\sum_{n\ge0}W^n(Bz+\eta)$, so $W^nBz$ is the external anchoring carried into
marks along all length-$n$ reference paths and $E_{\mathrm{eff}}=(I-W)^{-1}B$ is the
effective external exposure that aggregates them. A venue with small \emph{direct}
vendor weight $B$ can therefore still track a proxy closely when its peers import
the same proxy, so direct anchoring and long-path transmission produce the same
mark level while differing in how they amplify shocks---which is how cross-venue
agreement forms without any venue reading an external price directly.

\section{Identification and Reopen Validation}
\label{sec:identification}

In this section $R$ denotes a population reduced form inside one active affine
regime $g$: either the deterministic relation or the conditional mean
$\E[m\mid z,g]=a_g+R_gz$ when $\E[\eta\mid z,g]=0$, written after augmenting
$z$ with a constant anchor or centering, so intercepts are not suppressed.
The regime label is a rule branch fixed before estimating $R_g$; if fallback,
median, clamp, or stale-feed labels are unobserved, row conclusions are
conditional diagnostics, and regime switches are projection residuals, not
hidden evidence for a topology. Regime indices are suppressed for readability.

\subsection{Price data cannot identify anchoring}

The first question is whether an analyst with unlimited closed-window price and
proxy data can decide how a mark is formed. Fix one active affine regime and stack
the anchors a venue may reference---matched external proxies, same-underlying
derivative marks, stale levels---in the vector $z\in\R^Q$. What the data reveal is
the \emph{reduced form} $R\in\R^{V\times Q}$, the matrix sending anchors to the $V$
venue marks through $m=Rz$. Behind it sits the structural oracle pair $(W,B)$: the
peer-topology matrix $W$, whose entry $w_{ij}$ is the weight venue $i$ places on
venue $j$'s mark, and the external-loading matrix $B$, so the marks are the fixed
point $m=(I-W)^{-1}Bz$. Pure external anchoring is $W=0$ (then $R=B$); a
self-referential oracle has $W\neq0$ and routes anchors through the network. The
question is whether $R$ tells these two apart, and the answer is no.

Intuitively, forming a closed-window mark is like setting a wall of clocks: each
clock can be set by a real external time signal (external anchoring, $W=0$) or by
copying its neighbours (peer reference, $W\neq0$). When the clocks agree, their
faces alone---like the marks alone---cannot reveal which mechanism set them, because
a copying network and an external tracker can display identical readings. The
reduced form $R$ is exactly those clock faces, and the results below make this
impossibility, and its disclosure-based cure, precise.

\begin{theorem}[Disclosure-free non-identification]\label{thm:obs-eq}
The reduced form $R$ does not identify anchoring: the externally anchored model
($W=0$) and infinitely many oracle topologies ($W\neq0$) generate the same observed
mapping $m=Rz$.
\end{theorem}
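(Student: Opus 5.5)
The plan is an explicit construction. Fix any reduced form $R\in\R^{V\times Q}$. The externally anchored model $(W,B)=(0,R)$ reproduces it trivially. So it suffices to exhibit, for the same $R$, an infinite family of pairs $(W,B)$ with $W\neq0$, $1\notin\operatorname{spec}(W)$, and
\[
(I-W)^{-1}B=R .
\]
By Lemma~\ref{prop:existence}, each such pair has the unique fixed point $m^*=(I-W)^{-1}Bz=Rz$ for every anchor vector $z$, so it generates the same observed mapping $m=Rz$. I will work in the deterministic (or conditional-mean, $\E[\eta\mid z]=0$) formulation of this section, so $\eta$ drops out.

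The construction inverts the reduced-form relation. Choose any $W$ with $I-W$ invertible and set
\[
B:=(I-W)R .
\]
Then $(I-W)^{-1}B=R$ by construction. For a concrete one-parameter family, take $W_\alpha=\alpha P$ with $P$ a fixed nonzero matrix, say a row-stochastic peer matrix such as $P=\tfrac1V\one\one^\top$ or a two-venue cycle. Restrict to $\alpha\in(0,1)$, so that $\rho(W_\alpha)\le\alpha<1$; this excludes the eigenvalue $1$ and places us in the convergent Neumann regime. Set $B_\alpha=(I-\alpha P)R$. Distinct values of $\alpha$ give distinct $W_\alpha$, so the family is uncountably infinite. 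Every member has $W\neq0$ and the same reduced form as $W=0$.

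For local stability, I will also note that $K(I-W_\alpha)$ has eigenvalues with positive real part, for instance when $K=\kappa I$ with $\kappa>0$, because the eigenvalues of $I-\alpha P$ have real parts at least $1-\alpha>0$. The equivalent models are therefore genuine stable oracles and not degenerate ones.

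The only point needing care is interpretive rather than algebraic. The constructed $B_\alpha=(I-\alpha P)R$ may have negative entries, or it may load on components of $z$ that are themselves derivative marks. The theorem as stated imposes no sign or support restrictions; those are exactly the disclosures the paper adds later. So I will remark that non-identification holds over unrestricted $(W,B)$. If nonnegativity is wanted, it can be preserved by taking $\alpha$ small whenever $R$ has strictly positive entries, since $B_\alpha\to R$ as $\alpha\to0$ and an open interval of $\alpha$ remains. I expect this sign-preserving refinement to be the main obstacle. The core equivalence itself is immediate from Lemma~\ref{prop:existence}.
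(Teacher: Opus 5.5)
Your proposal is correct and is essentially the paper's proof. Both set $B=(I-W)R$ for any $W$ with $I-W$ invertible, so that $(I-W)^{-1}B=R$. Your explicit stable family $W_\alpha=\alpha P$, $\alpha\in(0,1)$, plays the role of the paper's observation that $\{W:\det(I-W)\neq0\}$ is open and dense, and your sign-preserving refinement is optional because the theorem is stated over unrestricted $(W,B)$.
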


\begin{proof}
Substituting $B=(I-W)R$ into the fixed point gives
\[
  (I-W)^{-1}Bz=(I-W)^{-1}(I-W)Rz=Rz=m,
\]
so $(W,(I-W)R)$ generates the observed mapping for every admissible $W$. The set
$\{W:\det(I-W)\neq0\}$ is open and dense in $\R^{V\times V}$, so the fibre
$\{(W,B):(I-W)^{-1}B=R\}$ is infinite. A mechanism class that imposes sign,
support, row-sum, normalization, or documented-input restrictions on $(W,B)$
intersects this fibre with those restrictions; the reduced form alone performs no
such intersection. \qed
\end{proof}

Theorem~\ref{thm:obs-eq} is static; a dynamic analyst could still hope that
adjustment speeds, lead-lag patterns, or innovation covariances break the tie. The
next theorem removes that hope by reparameterizing the adjustment matrix so the
entire conditional path law is preserved.

\begin{theorem}[Path-law impossibility]\label{thm:path-law}
The full conditional law of $\{m_t\}$ given $\{z_t\}$ still does not identify the
topology: for every $W_2$ with $I-W_2$ nonsingular there is a reparameterization
$(W_2,B_2,K_2,\eta^{(2)})$ of the mark dynamics that induces the same conditional
law, so every measurable functional of the joint law of $(\{m_t\},\{z_t\})$ is
constant across the class.
\end{theorem}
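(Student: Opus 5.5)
The plan is to reparameterize the drift of equation~\eqref{eq:substituted} so that it is literally the same function of $(m_t,z_t)$ for every admissible $W_2$, keeping the diffusion unchanged. Then the SDE driving $\{m_t\}$ given $\{z_t\}$ is identical, and equality of the conditional laws follows from uniqueness in law.

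Fix the true parameters $(W_1,B_1,K_1,\eta^{(1)})$ under Assumption~\ref{ass:regime}, with $\Sigma$ and the driving noise $\epsilon$ given. The drift coefficient is
\[
  -K_1(I-W_1)m+K_1B_1z+K_1\eta^{(1)}.
\]
For an arbitrary $W_2$ with $I-W_2$ nonsingular, I would set
\[
  K_2:=K_1(I-W_1)(I-W_2)^{-1},\qquad
  B_2:=(I-W_2)(I-W_1)^{-1}B_1,\qquad
  \eta^{(2)}:=(I-W_2)(I-W_1)^{-1}\eta^{(1)},
\]
and keep $\Sigma_2=\Sigma_1$. A direct substitution then gives
\[
  K_2(I-W_2)=K_1(I-W_1),\qquad K_2B_2=K_1B_1,\qquad K_2\eta^{(2)}=K_1\eta^{(1)}.
\]
So the drift is pointwise identical as a function of $(m,z)$, and so is the diffusion coefficient. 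The same substitution also reproduces the static observational equivalence of Theorem~\ref{thm:obs-eq}, since
\[
  (I-W_2)^{-1}B_2=(I-W_1)^{-1}B_1=R.
\]

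The key steps, in order, are as follows.
\begin{itemize}
\item Verify the three identities above.
\item Check admissibility of the new parameters. $K_2$ is nonsingular, being a product of nonsingular factors, provided $I-W_1$ is invertible; this holds by Lemma~\ref{prop:existence}, which gives uniqueness of the original fixed point. The stability spectrum is unchanged because $K_2(I-W_2)=K_1(I-W_1)$.
\item Argue that, conditional on the path $\{z_t\}$, the two systems are the same linear SDE with time-varying affine forcing. Such an equation has a pathwise unique strong solution, namely the variation-of-constants formula. Given the same initial law and the same Brownian $\epsilon$, the two solutions coincide pathwise, so their conditional laws coincide.
\item Conclude that any measurable functional of the joint law of $(\{m_t\},\{z_t\})$, including information shares, lead-lag and Granger statistics, is constant across the class indexed by $W_2$.
\end{itemize}

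Two points need care, and I expect the main obstacle to be the second.

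\emph{Meaning of $\eta$.} If $\eta_t$ is a stochastic process rather than a constant, the map $\eta^{(2)}=(I-W_2)(I-W_1)^{-1}\eta^{(1)}$ is linear and deterministic. Hence $\eta^{(2)}$ is again a process whose joint law with $z$ is determined, and the driving term $K_2\eta^{(2)}$ equals $K_1\eta^{(1)}$ pathwise. If $\eta$ carries structural restrictions, such as independence from $z$ or mean zero, I would note that these are preserved under the linear map.

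\emph{Where $K$ lives.} Assumption~\ref{ass:regime} only requires $K$ nonsingular "on marked components", and some rules may impose structure on $K$, for example that it is diagonal, with funding acting venue by venue. My $K_2$ is generally not diagonal. I would handle this by stating the theorem over the unrestricted parameter class, as the statement allows: it asserts that a reparameterization exists, with no structural constraint on $K_2$. I would then remark that imposing a diagonal structure on $K$ is exactly the kind of disclosure that the later diagonal-adjustment result exploits. This observation explains why the impossibility must allow a general $K_2$.
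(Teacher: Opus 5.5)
Your proposal is correct and is essentially the paper's proof: you use the same choice $K_2=K_1(I-W_1)(I-W_2)^{-1}$, $B_2=(I-W_2)(I-W_1)^{-1}B_1$, $\eta^{(2)}=(I-W_2)(I-W_1)^{-1}\eta$, the same three drift identities, and uniqueness (in law, or pathwise as you argue) of the resulting linear SDE, with the residual law left free so that $\eta^{(2)}$ is admissible. One small correction: Lemma~\ref{prop:existence} does not deliver invertibility of $I-W_1$, since it only characterizes when the fixed point is unique, so you should state it as a standing condition on the class, as the paper does.
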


\begin{proof}
Take any $(W_1,B_1,K_1,\Sigma)$ in the class with $I-W_1$ and $K_1$ nonsingular and
the residual law unrestricted, and for a nonsingular $I-W_2$ set
\[
  K_2=K_1(I-W_1)(I-W_2)^{-1},\quad
  B_2=(I-W_2)(I-W_1)^{-1}B_1,\quad
  \eta^{(2)}_t=(I-W_2)(I-W_1)^{-1}\eta_t .
\]
Under equation~\eqref{eq:substituted} the resulting drift has slope and intercept
\[
  K_2(I-W_2)=K_1(I-W_1),\qquad K_2B_2=K_1B_1,\qquad K_2\eta^{(2)}_t=K_1\eta_t ,
\]
so the affine drift and the diffusion $\Sigma$ agree with those of $(W_1,B_1,K_1)$
at every state. The two parameterizations therefore solve the same linear SDE, and
uniqueness in law gives equal conditional laws of $\{m_t\}$ given $\{z_t\}$;
$\mathcal L(\eta^{(2)})$ is admissible because the class leaves the residual law
free. \qed
\end{proof}

Taking $W_2=0$ shows every admissible network is path-law equivalent to a purely
external model, so a test of external anchoring against endogenous consensus built
from lead-lag, Granger, common-factor, or information-share statistics has power
equal to its size: even continuous, error-free path data cannot reject the wrong
mechanism. The class shrinks only when the model restricts $K$ (for example
diagonal venue-specific adjustment) or the support and signs of $(W,B)$, and each
such restriction is a disclosure (Theorem~\ref{thm:row-id}), not a quantity
recoverable from the path law. This is the precise sense in which the closed-window
anchoring question is unanswerable from prices and answerable only from what a
venue discloses about its rule.

\subsection{Disclosure restores identification}

If price data cannot separate the topologies, the question becomes what minimal
piece of rule information does. The cheapest disclosure acts on the adjustment
channel: a venue states that its mark control is \emph{diagonal}---each venue pulls
only its own mark toward its own index, the rule almost every venue actually runs.
We keep the natural regularity for this class: the adjustment $K$ is diagonal with
positive entries, $W\ge0$, the fixed point is locally stable ($K(I-W)$ a nonsingular
M-matrix), and residuals satisfy $\E[\eta_t\mid\mathcal F^{m,z}_t]=0$. Under that one
disclosure the impossibility of Theorem~\ref{thm:path-law} dissolves: the path law
identifies the normalized topology and its support.

\begin{theorem}[Identification under disclosed diagonal adjustment]\label{thm:diag-k}
With $K$ disclosed diagonal and the regularity above, the path law identifies the
drift coefficient $A=K(I-W)$ and the product $KB$. Hence for every venue $i$ the
normalized peer row and its support are identified,
\[
  \frac{w_{ij}}{1-w_{ii}}=-\frac{A_{ij}}{A_{ii}}\quad(j\neq i),
\]
and disclosing $w_{ii}$ (for example $w_{ii}=0$) or $K_i$ identifies the full row
$w_{i\cdot}$ and $b_{i\cdot}=K_i^{-1}(KB)_{i\cdot}$.
\end{theorem}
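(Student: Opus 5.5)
The plan has three steps: recover the drift coefficients $A=K(I-W)$ and $KB$ from the path law; read the normalized row off $A$ using the diagonal structure of $K$; then pin down the scale from one disclosed quantity.

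\emph{Step 1: the drift is identified.} Under the substituted dynamics \eqref{eq:substituted} with constant $(W,B,K)$, write the drift as $-Am_t+Cz_t+K\eta_t$ with $A=K(I-W)$ and $C=KB$. The residual condition $\E[\eta_t\mid\mathcal F^{m,z}_t]=0$ makes $K\eta_t$ a mean-zero innovation given the joint filtration. So the conditional drift of $m_t$ given the observed history is the affine map $m\mapsto -Am+Cz$. The path law determines this conditional drift, for instance as the $L^2$ limit of $\E[m_{t+h}-m_t\mid\mathcal F_t]/h$. We then read off $(A,C)$, provided the regressors $(m_t,z_t)$ are not confined to a proper affine subspace. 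This rank requirement is where I expect the main obstacle, and I would handle it in one of two ways. First, stability ($A$ a nonsingular M-matrix, so its eigenvalues have positive real part) together with nondegenerate $\Sigma$ makes the stationary conditional covariance of $m$ given $z$ positive definite. Second, if needed, I would state explicitly that $z$ has full-rank variation, as in the earlier rank conventions of the paper. This is also exactly where Theorem~\ref{thm:path-law} fails to bite: its reparameterization $K_2=K_1(I-W_1)(I-W_2)^{-1}$ generally leaves the diagonal class, so the fibre collapses once $K$ is restricted to be diagonal.

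\emph{Step 2: normalized rows.} With $K=\operatorname{diag}(K_1,\dots,K_V)$, row $i$ of $A$ is $K_i(\delta_{ij}-w_{ij})$. Hence $A_{ii}=K_i(1-w_{ii})$ and $A_{ij}=-K_iw_{ij}$ for $j\neq i$. The M-matrix regularity gives $A_{ii}>0$, because a nonsingular M-matrix has positive diagonal. Dividing then yields $w_{ij}/(1-w_{ii})=-A_{ij}/A_{ii}$, in which the unknown $K_i$ cancels. Since $1-w_{ii}>0$, the zero pattern of this ratio is the support of the off-diagonal row, so the support is identified; $W\ge0$ fixes the signs consistently.

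\emph{Step 3: full row under one more disclosure.} If $w_{ii}$ is disclosed, then $K_i=A_{ii}/(1-w_{ii})$ and $w_{ij}=-A_{ij}/K_i$. If instead $K_i$ is disclosed, then $w_{ii}=1-A_{ii}/K_i$ and $w_{ij}=-A_{ij}/K_i$. In either case $K_i$ is known, and $b_{i\cdot}=K_i^{-1}C_{i\cdot}=K_i^{-1}(KB)_{i\cdot}$ follows.

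Finally, I would remark that without a disclosure each row retains exactly one scale degree of freedom, namely $(K_i,w_{ii})$ moving along $K_i(1-w_{ii})=A_{ii}$. This shows the normalization in the statement is sharp.
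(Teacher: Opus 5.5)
Your proposal is correct and follows the paper's proof essentially step for step. You identify $A=K(I-W)$ and $KB$ from the affine drift using the residual condition and full-rank regressors, read row $i$ of $A$ as $K_i(\delta_{ij}-w_{ij})$ so that the ratio cancels $K_i$, and then fix the scale with a disclosed $w_{ii}$ or $K_i$ to recover $w_{i\cdot}$ and $b_{i\cdot}=K_i^{-1}(KB)_{i\cdot}$. Your added care on the rank condition, your use of the M-matrix property to get $A_{ii}>0$, and your sharpness remark about the residual one-parameter freedom $K_i(1-w_{ii})=A_{ii}$ are sound refinements of points the paper treats more tersely.
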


\begin{proof}
Equality of conditional path laws, with a fixed nondegenerate diffusion observed
continuously, forces equality of the affine drift on the state support. That drift
is
\[
  -K(I-W)\,m_t + KB\,z_t + K\eta_t ,
\]
and since $\E[\eta_t\mid\mathcal F^{m,z}_t]=0$ with $z$ of full support, its slope and
intercept separate, identifying
\[
  A:=K(I-W)\qquad\text{and}\qquad KB .
\]
With $K$ diagonal and $w_{ii}\neq1$ (the M-matrix condition), row $i$ of $A$ is
\[
  A_{ii}=K_i(1-w_{ii}),\qquad A_{ij}=-K_i w_{ij}\ (j\neq i),
\]
so the ratio cancels the unknown speed $K_i$,
\[
  \frac{w_{ij}}{1-w_{ii}}=-\frac{A_{ij}}{A_{ii}},\qquad
  w_{ij}\neq0\iff A_{ij}\neq0 .
\]
Disclosing $w_{ii}=0$ gives $K_i=A_{ii}$ and $w_{ij}=-A_{ij}/A_{ii}$; disclosing
$K_i$ gives $w_{i\cdot}$ from row $i$ of $I-K^{-1}A$; either way $b_{i\cdot}$ follows
from $KB$. \qed
\end{proof}

The economic content is sharp: a one-line disclosure---``each venue adjusts only
its own mark''---converts an estimand with power equal to size into a fully
identified normalized topology, and pairwise drift lead-lag coefficients regain the
content Theorem~\ref{thm:path-law} provably denied them in the unrestricted class.
Such a disclosure is cheap and externally checkable, yet it is exactly what
separates an auditable oracle from an opaque one. The next result turns from the
adjustment channel to the anchor channel: what a single venue must disclose about
its components for its row to be identified, and when the same disclosure
\emph{falsifies} the rule.

\subsection{The row-level disclosure frontier}

The two results above are global, but a venue discloses its rule one row at a time.
For venue $v$, write $P_v$ for the disclosed set of permitted peer inputs, $F_v$ for
the external anchors it forbids ($b_{v,q}=0$ for $q\in F_v$), and let any disclosed
weight or cap limits enter as convex constraints $C_v^Ww+C_v^Bb_v(w)\le d_v$ on the
peer row $w=w_{v,P_v}$, where $b_v(w)=r_v-w^\top R_{P_v,\cdot}$ is the external
loading the row implies. Collecting these gives the row's admissibility set, whose
dimension counts exactly how much ambiguity the disclosure leaves.

\begin{theorem}[Local disclosure frontier for oracle rows]\label{thm:row-id}
The row-admissibility set is
\[
  \mathcal A_v(R)=\bigl\{w\in\R^{|P_v|}:\ R_{P_v,F_v}^{\top}w=r_{v,F_v},\
  C_v^Ww+C_v^Bb_v(w)\le d_v\bigr\},
\]
and, absent active bounds,
\[
  \dim\mathcal A_v(R)=|P_v|-\rank\!\bigl(R_{P_v,F_v}^{\top}\bigr).
\]
\end{theorem}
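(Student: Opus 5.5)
The plan is to start from the row form of the fixed-point identity of Theorem~\ref{thm:obs-eq}. Any structural pair $(W,B)$ consistent with the reduced form satisfies $B=(I-W)R$. Reading off row $v$ gives
\[
b_{v\cdot}=r_{v\cdot}-w_{v\cdot}^{\top}R .
\]
Restricting peer weights to the disclosed support $P_v$ (so $w_{v,j}=0$ for $j\notin P_v$), this becomes $b_v(w)=r_v-w^\top R_{P_v,\cdot}$, which is exactly the implied external loading in the statement. I will adopt the convention that the diagonal entry $w_{vv}$ is either excluded from $P_v$ or included in it as a disclosed self-input. With that convention, the identity is an exact equivalence: a row $w$ is consistent with $R$ and the support disclosure if and only if the induced $b_v(w)$ is its external row. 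The row of $(W,B)$ is then parameterized entirely by $w\in\R^{|P_v|}$.

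Next I impose the remaining disclosures one at a time. Forbidden anchors require $b_{v,q}(w)=0$ for $q\in F_v$. In coordinates this reads $r_{v,q}-\sum_{j\in P_v}w_jR_{j,q}=0$, that is, $R_{P_v,F_v}^{\top}w=r_{v,F_v}$. Disclosed weight or cap limits are imposed directly as $C_v^Ww+C_v^Bb_v(w)\le d_v$. Because $b_v$ is affine in $w$, this is a polyhedral constraint in $w$. Intersecting the two gives the displayed $\mathcal A_v(R)$, and this intersection is all the first claim needs. I will note that $\mathcal A_v(R)$ is a (possibly empty) polyhedron, which is what later licenses the three outcomes: identify, falsify, or leave a positive-dimensional class.

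For the dimension count, assume the linear system is consistent, i.e.\ $r_{v,F_v}$ lies in the column space of $R_{P_v,F_v}^{\top}$. Its solution set is then an affine subspace $w_0+\ker R_{P_v,F_v}^{\top}$. By rank--nullity this subspace has dimension $|P_v|-\rank(R_{P_v,F_v}^{\top})$.

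The main obstacle is making ``absent active bounds'' precise. I will read it as follows: some feasible $w^\circ$ satisfies every inequality strictly, or more generally the inequalities cut out a full-dimensional piece relative to the affine subspace. Under that reading, a small ball around $w^\circ$ inside the affine subspace stays in $\mathcal A_v(R)$. The dimension of the polyhedron (the dimension of its affine hull) therefore equals that of the subspace. I expect to handle the degenerate cases in a remark rather than inside the count: an empty set means the rule is falsified, and a case where inequalities are implicitly equalities means the bound is active and lowers the dimension further. Finally, I will note that rank $=|P_v|$ gives a singleton, which is point identification of the row.
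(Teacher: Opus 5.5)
Your proposal is correct and follows the paper's proof. Row $v$ of $B=(I-W)R$ gives $b_v(w)=r_v-w^\top R_{P_v,\cdot}$, the forbidden anchors become $R_{P_v,F_v}^{\top}w=r_{v,F_v}$, intersecting with the convex constraints yields $\mathcal A_v(R)$, and rank--nullity gives the dimension. You are a little more careful than the paper in stating that the count needs the system to be consistent and a relatively interior feasible point (your reading of ``absent active bounds''); the paper instead adds that perturbing row $v$ along $\ker R_{P_v,F_v}^{\top}$ with $B_\varepsilon=(I-W_\varepsilon)R$ reproduces $R$, and that step, like your ``exact equivalence'' claim, implicitly needs $I-W$ to stay nonsingular, which holds locally.
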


\begin{proof}
Multiplying $R=(I-W)^{-1}B$ by $I-W$ gives $B=(I-W)R$, so the row is
\[
  b_v=r_v-w_{v,P_v}^{\top}R_{P_v,\cdot}.
\]
The forbidden-anchor restriction $b_{v,q}=0$ ($q\in F_v$) is then the linear system
$R_{P_v,F_v}^{\top}w=r_{v,F_v}$, which intersected with the disclosed convex
constraints is $\mathcal A_v(R)$; rank-nullity gives its dimension. If the kernel of
$R_{P_v,F_v}^{\top}$ is nonzero and unconstrained, any $\delta$ in it gives a second
feasible row: change only row $v$ of $W$, set $B_\varepsilon=(I-W_\varepsilon)R$, and
the reduced form is reproduced, so the row is not identified. \qed
\end{proof}

The set therefore lands in exactly three cases: empty---the reduced form and the
disclosed rule cannot both hold in the local affine model; a singleton---the row
$w_{v,P_v}$ and $b_v$ are locally identified; or positive-dimensional---distinct
oracle rules share the same forbidden anchors and reduced form. The rank count is
the operational lever: a venue permitting $p$ peer inputs must disclose $p$
independent exclusions, weights, or normalizations for its row to be identified,
and a global $W$ is recovered only when every row is a singleton and the completed
$W$ satisfies $1\notin\operatorname{spec}(W)$. More price data never changes this
count; only rule information does.

For an estimated reduced form $\widehat R$ the population equalities are replaced by
the row projection distance
\[
  \Delta_v(\widehat R)=\min_{C_v^Ww+C_v^Bb_v(w)\le d_v}
  \bigl\|\widehat R_{P_v,F_v}^{\top}w-\widehat r_{v,F_v}\bigr\|_2,
\]
with $\mathcal A_v^\tau$ replacing the equalities by residual at most $\tau$. Only
such a tolerance turns the diagnostic into finite-sample compatibility: an
exact-empty $\mathcal A_v(\widehat R)$ is a projection residual, not a population
rejection, unless it exceeds the tolerance the next corollary makes explicit.

\begin{corollary}[Finite-sample tolerance]\label{cor:rank-tolerance}
Let $M_v=R_{P_v,F_v}^{\top}$ and suppose a population-compatible row $w_0$ with
$\|w_0\|_1\le L$ stays feasible for the sample projection problem and the entrywise
coefficient error on rows $P_v\cup\{v\}$ and columns $F_v$ is at most
$\varepsilon$. Then
\[
  \Delta_v(\widehat R)\le\sqrt{|F_v|}\,(L+1)\varepsilon .
\]
\end{corollary}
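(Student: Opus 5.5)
The plan is to bound the minimum by the value at a single feasible point, namely the population-compatible row $w_0$. By hypothesis $w_0$ is feasible for the sample projection problem, so
\[
\Delta_v(\widehat R)\le\bigl\|\widehat M_v w_0-\widehat r_{v,F_v}\bigr\|_2,
\qquad \widehat M_v=\widehat R_{P_v,F_v}^{\top}.
\]
Population compatibility gives $M_v w_0=r_{v,F_v}$; this is the forbidden-anchor equality from Theorem~\ref{thm:row-id}. Subtracting it inside the norm yields the residual
\[
\widehat M_v w_0-\widehat r_{v,F_v}=(\widehat M_v-M_v)w_0-(\widehat r_{v,F_v}-r_{v,F_v}).
\]
Every entry appearing here lies in rows $P_v\cup\{v\}$ and columns $F_v$, which is exactly the block the error hypothesis controls.

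Next I bound the residual coordinatewise, then pass to the Euclidean norm. Fix a forbidden anchor $q\in F_v$. The $q$-th coordinate is
\[
\sum_{j\in P_v}(\widehat R_{jq}-R_{jq})w_{0,j}-(\widehat R_{vq}-R_{vq}).
\]
By Hölder ($\ell_\infty$ against $\ell_1$), the first term has absolute value at most $\varepsilon\|w_0\|_1\le \varepsilon L$. The second term is at most $\varepsilon$. So each coordinate is at most $(L+1)\varepsilon$ in absolute value. The vector has $|F_v|$ coordinates, so $\|x\|_2\le\sqrt{|F_v|}\,\|x\|_\infty$ gives $\Delta_v(\widehat R)\le\sqrt{|F_v|}(L+1)\varepsilon$, as claimed.

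There is one point of care, which I expect to be the only real obstacle. When $v\in P_v$ (self-reference permitted), the row $v$ of $\widehat R$ appears both in $\widehat M_v$ and in $\widehat r_{v,F_v}$. The coordinatewise bound still holds, since the two error contributions are simply added by the triangle inequality; the constant $L+1$ already accounts for that.

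Feasibility of $w_0$ under the convex constraints $C_v^Ww+C_v^Bb_v(w)\le d_v$ with $\widehat R$ in place of $R$ is assumed, not proved. The implication to note afterwards is that a residual exceeding this bound supports a finite-sample rejection through $\mathcal A_v^\tau$ with $\tau=\sqrt{|F_v|}(L+1)\varepsilon$.
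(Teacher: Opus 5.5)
Your proposal is correct and follows the paper's own argument: evaluate the projection objective at the feasible $w_0$, bound each forbidden coordinate by $\|w_0\|_1\varepsilon+\varepsilon\le(L+1)\varepsilon$, and pass to the $\ell_2$ norm with the $\sqrt{|F_v|}$ factor. The one difference is that you explicitly subtract the population equality $M_v w_0=r_{v,F_v}$, which the paper's one-line coordinate bound uses without stating it.
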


\begin{proof}
Evaluate the projection objective at the feasible $w_0$. For each $q\in F_v$,
\[
  \bigl|\widehat R_{P_v,q}^{\top}w_0-\widehat r_{v,q}\bigr|
  \le\|w_0\|_1\,\varepsilon+\varepsilon\le(L+1)\varepsilon,
\]
by the entrywise bound and $\|w_0\|_1\le L$; taking the $\ell_2$ norm over the
$|F_v|$ forbidden coordinates gives the bound. \qed
\end{proof}

The corollary calibrates the falsification: an empty exact set is evidence against
a disclosed row only when $\Delta_v(\widehat R)$ exceeds
$\sqrt{|F_v|}\,(L+1)\varepsilon$, so a venue that genuinely follows its rule cannot
be rejected by sampling noise alone, while one whose row is incompatible with $R$
beyond that radius is. Together with Theorem~\ref{thm:row-id} this is the
disclosure frontier an auditor walks: count the independent restrictions to decide
whether the row can be identified at all, then compare $\Delta_v$ to the tolerance
to decide whether the disclosed rule survives.

The non-identification results say price data cannot separate anchoring from
consensus, but the cash reopen can: the moment the primary market reopens it prints
a tradable level no oracle controls. Write the reopen price as $S_o=h^\top z+\nu_o$,
splitting it into the part $h^\top z$ explained by the closed-window external anchors
and the opening-auction residual $\nu_o$, and let $m^-=(I-W)^{-1}(Bz+\eta)$ be the
pre-open mark vector. The reopen gap $G_o=S_o\one-m^-$ then decomposes into an
anchor-mismatch term, a topology-amplified oracle-noise term, and the auction
residual.

\begin{theorem}[Reopen-gap decomposition]\label{thm:reopen}
The pre-open-to-open gap satisfies
\[
  G_o=\left[\one h^\top-(I-W)^{-1}B\right]z-(I-W)^{-1}\eta+\one\nu_o .
\]
\end{theorem}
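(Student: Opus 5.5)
The plan is to substitute directly into the definition of the gap and regroup. By Lemma~\ref{prop:existence}, under Assumption~\ref{ass:regime} and with $1\notin\operatorname{spec}(W)$, the pre-open mark is the unique fixed point $m^-=(I-W)^{-1}(Bz+\eta)$. This is the standing hypothesis under which $m^-$ is defined in the statement, so the invertibility of $I-W$ is inherited rather than assumed anew. The reopen price is a scalar, $S_o=h^\top z+\nu_o$, and the gap broadcasts it to every venue through $\one\in\R^V$.

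The steps are as follows.
\begin{enumerate}
\item Write $S_o\one=\one(h^\top z)+\one\nu_o=(\one h^\top)z+\one\nu_o$. This uses only associativity of the outer product: $\one(h^\top z)=(\one h^\top)z$, since $h^\top z$ is a scalar.
\item Expand the mark term by linearity of $(I-W)^{-1}$: $m^-=(I-W)^{-1}Bz+(I-W)^{-1}\eta$.
\item Subtract and collect the terms linear in $z$ into the single matrix $\one h^\top-(I-W)^{-1}B$. What remains is $-(I-W)^{-1}\eta+\one\nu_o$, which is exactly the claimed identity.
\end{enumerate}

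I would then read off the three pieces.
\begin{itemize}
\item The first is the anchor-mismatch term. It involves the effective exposure $E_{\mathrm{eff}}=(I-W)^{-1}B$ from the Neumann-series discussion, so it vanishes exactly when every venue's effective exposure equals the reopen loading, $E_{\mathrm{eff}}=\one h^\top$.
\item The second is oracle noise amplified by the network resolvent.
\item The third is the auction residual.
\end{itemize}

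There is no genuine obstacle, since the identity is algebraic. The only care points are bookkeeping. First, $\nu_o$ is a scalar broadcast by $\one$, while $\eta$ is venue-level. Second, if $z$ has been augmented with a constant anchor as stipulated at the start of Section~\ref{sec:identification}, the intercepts are absorbed into $h$ and $B$, so no separate constant term appears. I would state both explicitly so the decomposition holds pathwise, without any distributional assumption on $\eta$ or $\nu_o$.
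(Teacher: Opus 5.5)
Your proposal is correct and is the same argument as the paper's: substitute $S_o=h^\top z+\nu_o$ and $m^-=(I-W)^{-1}(Bz+\eta)$ into $G_o=S_o\one-m^-$, then collect the terms in $z$, $\eta$, and $\nu_o$. Your bookkeeping remarks (the scalar $\nu_o$ broadcast by $\one$, and invertibility of $I-W$ inherited from the definition of $m^-$) are consistent with the paper's setup.
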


\begin{proof}
Substitute $S_o=h^\top z+\nu_o$ and $m^-=(I-W)^{-1}(Bz+\eta)$ into
$G_o=S_o\one-m^-$:
\[
  G_o=\one(h^\top z+\nu_o)-(I-W)^{-1}Bz-(I-W)^{-1}\eta,
\]
and collecting the terms in $z$, $\eta$, and $\nu_o$ gives the stated
expression. \qed
\end{proof}

The bracketed term $\one h^\top-(I-W)^{-1}B$ is the validation object: a
well-anchored oracle makes it small because closed-window anchors reproduce the
reopen level, so a large residual gap loading on own and peer drift rather than
external anchors is evidence of endogenous transmission.

\section{Numerical illustration}
\label{sec:simulation}

The identification results are easiest to read on a controlled example where the
true topology is known. We build two oracle networks on $V=3$ venues, each fed by
three anchors---a matched external proxy, a derivative anchor, and a stale level.
The first network is purely external ($W=0$): each venue copies its own index. The
second routes the same external information through a heavy directed peer cycle of
spectral radius $\rho(W)=0.77$, choosing $B=(I-W)R$ so that both networks share the
\emph{same} reduced form $R$ by construction. The example then asks two questions:
can any price statistic tell the two topologies apart, and what does one line of
disclosure buy.

Consider the first question. Under both topologies a representative venue's mark
path coincides to machine precision (maximum gap $1.1\times10^{-13}$) even though
the weight matrices are structurally opposite---a zero matrix against a dense
directed cycle. Every price-based statistic is consequently identical across the two
wirings: the reduced forms differ by $0$, the proxy $R^2$ is $0.9997$ in both,
cross-venue agreement is $0.970$ in both, and a lead-lag information-share proxy is
$0.007$ in both. An analyst with unlimited price and proxy data sees a single object
and cannot recover which mechanism produced it, exactly as
Theorems~\ref{thm:obs-eq}--\ref{thm:path-law} predict.

Figure~\ref{fig:simfrontier} visualizes the disclosure frontier for one venue. The
two axes are that venue's peer weights $(w_{12},w_{13})$---how much its mark loads
on each of the other two venues---so every point in the plane is one candidate row.
The shaded region collects the rows that reproduce the observed reduced form: with
price data alone it is the whole plane, and the row is completely unidentified
(dimension $2$). Disclosing that a derivative anchor is forbidden cuts the plane to
the solid blue line (dimension $1$), and adding the diagonal-adjustment rule of
Theorem~\ref{thm:diag-k} pins that line to the single black marker at the true row
$w^\star=(0.3,0.4)$ (dimension $0$); the collapse $2\to1\to0$ is exactly the count
$|P_v|-\rank(M_v)$ of Corollary~\ref{cor:rank-tolerance}. The red dashed line is a
\emph{false} disclosure---a venue forbidding a derivative anchor it actually
loads---and it never meets the shaded set, so its projection distance is
$\Delta_v=0.11$ instead of the $\Delta_v=0$ of the true rule. The same disclosure
that identifies an honest row therefore falsifies a dishonest one.

\begin{figure}[t]
\centering
\includegraphics[width=0.85\textwidth]{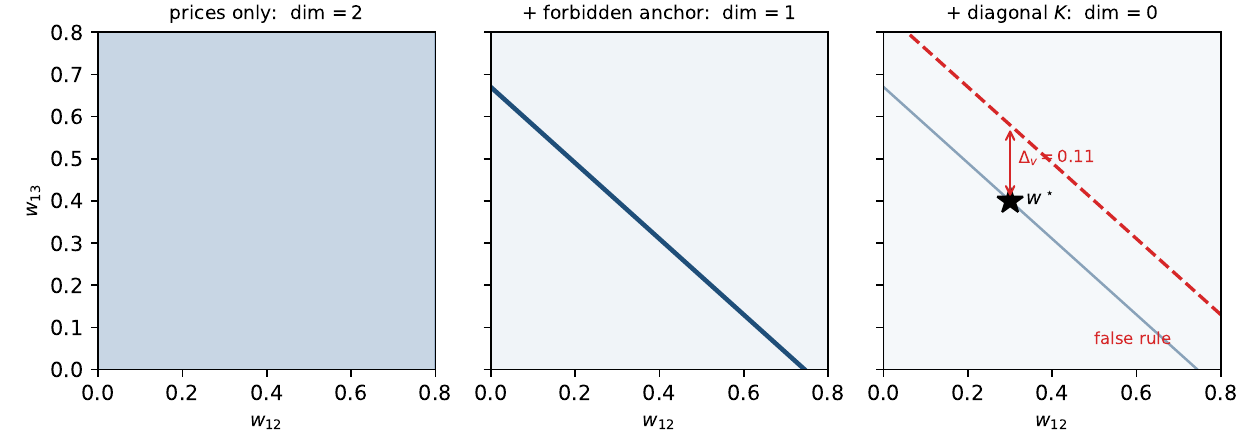}
\caption{Disclosure frontier for one venue. In the peer-weight plane
$(w_{12},w_{13})$ the admissible rows shrink from the shaded plane (prices) to a line
(forbidden anchor) to the true point (diagonal $K$); a false rule (dashed) leaves
$\Delta_v>0$.}
\label{fig:simfrontier}
\end{figure}

Read together, the example is the paper in miniature: across opposite topologies
every price statistic matches---reduced-form gap $0$, proxy $R^2=0.9997$,
cross-venue agreement $0.970$, lead-lag share $0.007$---yet disclosure collapses the
admissible dimension $2\to1\to0$ while a false rule stays infeasible
($\Delta_v=0.11$). Prices cannot identify the oracle; only disclosure, one
restriction at a time, can. The run is reproduced under a fixed seed by
\texttt{figures/make\_simulation.py}.

\section{Measurement Implications}
\label{sec:measurement}

The non-identification theorem dictates the design: lead not with an unconstrained
proxy regression but with restrictions that break the equivalence class. Where the
simulation of Section~\ref{sec:simulation} varies a known topology, real data can
only probe observable signatures, since prices alone do not identify the mechanism
(Theorem~\ref{thm:obs-eq}); each check below offers indirect support for a
proposition rather than a recovery of $W$, and all are reproduced from frozen
artifacts.
Three reproducible companion simulations probe robustness beyond the indirect
real-data signatures---a variance-ratio endogeneity detector, a permutation-based
reopen event study, and an oracle-incident echo illustration---all seeded and
produced by \texttt{figures/make\_experiments.py}.

The theorems yield four observable implications that organize the empirics: row
falsification through $\Delta_v$ (Theorem~\ref{thm:row-id},
Corollary~\ref{cor:rank-tolerance}); limited discrimination, since a
disclosed-weight row and a pure-external $w=0$ baseline fit comparably under the
path law (Theorem~\ref{thm:path-law}); a bounded live-external share, as deep-closed
marks track index futures but not a crypto placebo; and reopen anchoring
(Theorem~\ref{thm:reopen}). Each subsection below reports one, all from frozen public
data.

\subsection{Data}
\begin{definition}[Anchoring type and provenance coding]\label{def:anchoring}
Row $v$ is \emph{externally anchored} if $W_{v\cdot}=0$ and $B_{v\cdot}\neq0$ and
shows \emph{derivative-network anchoring} if $W_{v\cdot}\neq0$; an input enters $B$
only if produced independently of the venue's same-underlying derivative marks and
timestamped before the mark update, and $W$ otherwise, with provenance-ambiguous
inputs (tokenized spot, validator oracle, external perpetual) carried under both
codings.
\end{definition}

We collect frozen public five-minute candles: venue mark and
index series from Binance, Bitget, Gate, and OKX; cash and extended-hours equity
bars from Yahoo; and ES, NQ, and BTC futures for the comovement placebo. The
sample spans April--June 2026 for four underlyings (AAPL, NVDA, TSLA, AMZN), with
each endpoint snapshot frozen so an offline replay reproduces every metric under a
fixed CSV SHA-256 hash. Disclosed OKX component rows are captured on two dated
snapshots, and provenance-sensitive inputs are carried under the dual coding of
Definition~\ref{def:anchoring}.

\subsection{Pre-open row falsification}

A row-level Theorem~\ref{thm:row-id} test estimates $\widehat R$ on a fixed
timestamp grid, builds the disclosed $P_v$ and $F_v$, and reports
$\Delta_v(\widehat R)$ with a fixed-design residual bootstrap for the
Corollary~\ref{cor:rank-tolerance} tolerance and a moving-block bootstrap that
re-estimates $\widehat R$ each replicate.

In the pre-open regime, where extended-hours anchors update live, the OKX row is
tested for four underlyings on $186$--$195$ five-minute timestamps over three NYSE
days under two frozen rule captures. The disclosed rule is falsified nowhere---cap
projection $\Delta_v\in[0.0024,0.0886]$, fixed-weight residuals $0.0036$--$0.1261$,
and a circularity-robust no-own-mark variant $0.0040$--$0.0505$---but the
pure-external $w=0$ baseline residuals ($0.0151$--$0.1275$) are of the same order
with mixed ordering across underlyings (Figure~\ref{fig:preopen}), so on these
rank-deficient, proxy-fed data the check is a falsification the disclosed support,
cap, and weights pass, not an identification (Theorem~\ref{thm:path-law}).

\begin{figure}[t]
\centering
\includegraphics[width=0.6\textwidth]{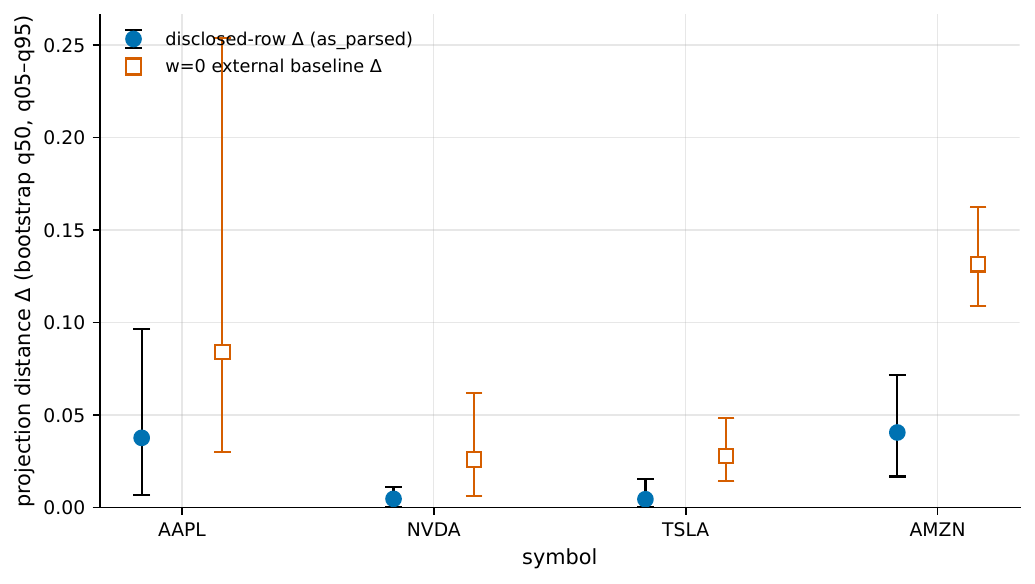}
\caption{Pre-open row projection distance $\Delta_v$ (disclosed row) with
moving-block bootstrap whiskers, against the $w=0$ baseline.}
\label{fig:preopen}
\end{figure}

\subsection{Deep-closed windows at scale}

The deepest closure---overnight, weekends, and one full-day holiday---anchors an
eight-week archival panel (11 April--9 June 2026; $9{,}809$ deep-closed timestamps
per underlying across six venue series). Marks are not frozen (median $20$--$123$ bps
from the last cash close), yet they track external index futures: deep-closed mark
returns on ES/NQ give $R^2$ up to $0.42$ with a BTC placebo below $0.07$. The
endogenous channel is nonetheless visible where no cash or futures information
flows---cross-venue dispersion is \emph{wider} than the regular session and a
Saturday variance floor follows the disclosed rules---the signature a seeded
variance-ratio simulation reproduces ($\mathrm{Var}(m)/\mathrm{Var}(z)>1$ under a
network). The deep-closed drift matches the eventual cash-open sign in $156$ of
$192$ series-reopens, and the frontier separates venues from rule text alone: Gate
discloses constituents but no weights, leaving its row admissibility set at
dimension $1$--$4$, a Theorem~\ref{thm:row-id} non-identification verdict exactly as
Theorems~\ref{thm:obs-eq}--\ref{thm:path-law} require.

\subsection{Archival reopen event study}

Theorem~\ref{thm:reopen} is tested with archival reopen event studies for four
underlyings over five NYSE opens ($104$ venue-series rows). Venue reopen moves track
the matched cash move closely (pooled median transition error $4.49$ bps) and far
better than a wrong-underlying control ($28.68$ bps) or a same-underlying time-shift
placebo ($16.9$--$142.0$ bps); a seeded permutation study on simulated reopens
shows the matched-anchor advantage rejects a no-anchoring null at scale ($p\approx2\times10^{-4}$ in simulation). These support
matched-underlying reopen anchoring without identifying $W$; the topology
falsification stays descriptive (linked pairs more dispersed, $13.39$ vs $6.62$ bps,
Welch $p=0.042$).

\section{Discussion}
The results carry a single mechanism-design message: a 24/7 equity-perpetual venue
cannot demonstrate healthy closed-market price discovery by exhibiting cross-venue
agreement or proxy tracking, because agreement is what an endogenous network
produces and proxy tracking is consistent with both external anchoring and a
network importing the same proxy along long reference paths. What a venue
\emph{can} demonstrate is disclosure---diagonal adjustment, peer support, forbidden
anchors, and weights---together with cash-reopen behavior, the only objects that
move $\mathcal A_v(R)$ off a positive-dimensional class. For a clearinghouse or
regulator this is a disclosure checklist rather than a price test: the row frontier
says how many independent exclusions a venue must publish for its row to be
identified, the finite-sample tolerance says when an apparent incompatibility is
sampling noise, and a venue disclosing enough to make $\mathcal A_v(R)$ a singleton
converts an opaque mark into an auditable one.

\section{Limitations}
The empirical claims are bounded. The samples are small---five reopen days, eight
to nine multi-day-closure reopens, and a thirteen-timestamp overnight pilot---so
inferential statistics are reported as descriptive ranking indices, and several
comovement and reopen checks are contemporaneous, confirming that marks move with
and snap to anchors the oracle already includes rather than discovering them. The
row diagnostics proxy Hyperliquid and own-perpetual components with public
perpetual candles and cannot observe stale-feed or active-branch labels, so
falsification survival is compatibility under a closure assumption, not a recovered
topology; and the venue rules are date-sensitive help-center disclosures that pin
permitted inputs but not fallback weights, which is exactly why identification
rests on disclosure and cash-reopen validation rather than price data alone.

\section{Conclusion}

Closed-market equity perpetual marks should be modeled as oracle fixed points:
$(I-W)^{-1}B$ separates external anchoring from recursive derivative-network
transmission, and within the unrestricted adjustment class no functional of
observed mark and proxy paths can distinguish price discovery from endogenous
consensus, while disclosed diagonal adjustment restores the normalized topology and
disclosed support and forbidden anchors give a row-level frontier. Empirically, the
disclosed OKX row survives pre-open falsification for four underlyings with the
pure-external baseline confirming the check's limited power, sampled reopen
behavior is anchored to the matched cash underlying, and an eight-week deep-closed
panel shows marks tracking live index futures with a Saturday variance floor
aligned to disclosed rules. The mechanism-design implication is that a 24/7
equity-perpetual venue cannot demonstrate healthy closed-market price discovery by
exhibiting cross-venue agreement or proxy tracking; it must disclose enough of the
oracle topology, or pass cash-reopen validation, to break the fixed-point
equivalence class.

\paragraph{Declarations.}
No human-subject data are used. The pilot uses aggregate computed outputs and
anonymized source snapshots. The author(s) report no conflict of interest. AI
assistance was used for drafting and formatting; all claims, proofs, and
submission decisions remain the author(s)' responsibility.

\bibliographystyle{splncs04}
\bibliography{references}

\end{document}